\DeclareMathAlphabet{\mathbbold}{U}{bbold}{m}{n}
\pgfplotsset{compat=1.13}
\renewcommand{\backref}[1]{}
\renewcommand{\backrefalt}[4]{%
\ifcase #1 %
\or
[p.\ #2]%
\else
[pp.\ #2]%
\fi}
\newcommand{\para}{%
 \@startsection{paragraph}{4}%
 {\z@}{2ex \@plus 3.3ex \@minus .2ex}{-1em}%
 {\normalfont\normalsize\bfseries}%
}
\providecommand{\U}[1]{\protect\rule{.1in}{.1in}}
\newtheorem{theorem}{Theorem}
\newtheorem{corollary}[theorem]{Corollary}
\newtheorem{lemma}[theorem]{Lemma}
\newtheorem{problem}[theorem]{Problem}
\let\oldproofname=\proofname
\renewcommand{\proofname}{\rm\bf{\oldproofname}}
\newcommand{\E}{\mathop{\mathbb{E}}}
\newcommand{\AND}{\mathsf{AND}} 
\newcommand{\OR}{\mathsf{OR}}
\newcommand{\NOR}{\mathsf{NOR}}
\newcommand{\Reals}{\mathbb{R}}
\renewcommand{\deg}{\mathrm{deg}}
\newcommand{\adeg}{\widetilde{\mathrm{deg}}}
\newcommand{\mathify}[1]{\ifmmode{#1}\else\mbox{$#1$}\fi}
\begin{document}

\title{\bfseries Lower Bounding the AND-OR Tree via Symmetrization}

\author{William Kretschmer\thanks{University of Texas at Austin. \ Email:
\texttt{kretsch@cs.utexas.edu}. \ Supported by a Simons Investigator Award.}}

\date{}
\maketitle

\begin{abstract}
We prove a simple, nearly tight lower bound on the approximate degree of the two-level $\AND$-$\OR$ tree using symmetrization arguments. Specifically, we show that $\adeg(\AND_m \circ \OR_n) = \widetilde{\Omega}(\sqrt{mn})$. We prove this lower bound via reduction to the $\OR$ function through a series of symmetrization steps, in contrast to most other proofs that involve formulating approximate degree as a linear program \cite{bt,sherstov,bbgk}. Our proof also demonstrates the power of a symmetrization technique involving Laurent polynomials (polynomials with negative exponents) that was previously introduced by Aaronson, Kothari, Kretschmer, and Thaler \cite{akkt}.
\end{abstract}



\section{Introduction}
\subsection{History of the AND-OR Tree}
The two-level $\AND$-$\OR$ tree has played an important role in the study of quantum query complexity. Given a set of $mn$ inputs over $\{0,1\}$, the problem is to compute the function $\AND_m \circ \OR_n = \mathop{\land}_{i=1}^m \mathop{\lor}_{j=1}^n x_{i,j}$. In the query model (see \cite{bd}), we assume access to an oracle that on input $(i,j)$ returns the bit $x_{i,j}$, and our goal is to compute $\AND_m \circ \OR_n(x_{1,1},\ldots,x_{m,n})$ in as few queries as possible.

One can show without too much difficulty that $\Theta(mn)$ queries are necessary and sufficient for any bounded-error randomized classical algorithm that computes $\AND_m \circ \OR_n$, by a standard adversary argument \cite{sw,ambainis}. But in the quantum query model, we can make queries on superpositions of inputs, and so the same lower bounds do not hold. In fact, one can do better in the quantum setting by using Grover's algorithm \cite{grover}, which with high probability finds a marked item in a list of $n$ items using just $O(\sqrt{n})$ queries. By applying Grover's algorithm recursively, along with an error reduction step on the inner subroutine, one obtains a quantum algorithm that makes $O(\sqrt{mn}\log m)$ queries \cite{bcw,hmd}. More nontrivially, H\o yer, Mosca, and de Wolf \cite{hmd} showed that by interleaving error-reduction with amplitude amplification, one can remove the extra log factor, giving a quantum algorithm that makes just $O(\sqrt{mn})$ queries.

The problem of lower bounding the quantum query complexity of the two-level $\AND$-$\OR$ tree proved to be much more challenging. Many of the early lower bounds on quantum query complexity were proved using the polynomial method of Beals et al. \cite{bbcmd}, which established a connection between quantum query complexity and approximate degree. The \textit{approximate degree} of a Boolean function $f: \{0,1\}^n \to \{0,1\}$, denoted $\adeg(f)$, is defined as the least degree of a polynomial $p(x_1,\ldots,x_n)$ over the reals that pointwise approximates $f$. We focus on constant factor approximations: that is, we require that for all $X \in \{0,1\}^n$, $p(X) \in [0, \alpha]$ whenever $f(X) = 0$, and $p(X) \in [\beta, 1]$ whenever $f(X) = 1$, for some constants $0 < \alpha < \beta < 1$. Note that the choices of $\alpha$ and $\beta$ are arbitrary, in the sense that varying them changes the approximate degree by at most a constant multiplicative factor. Beals et al. \cite{bbcmd} showed that for any quantum query algorithm that makes $T$ queries to a string $x \in \{0,1\}^n$, the acceptance probability of the algorithm can be expressed as a real polynomial of degree at most $2T$ in the bits of $x$. In particular, if the quantum query algorithm computes some function $f$ with bounded error, then this establishes that $\adeg(f)$ asymptotically lower bounds the number of quantum queries needed to compute $f$.

The $\AND$-$\OR$ tree initially resisted attempts at lower bounds via approximate degree. At the time, one of the only known tools for lower bounding approximate degree was \textit{symmetrization}, a technique that involves transforming a (typically symmetric) multivariate polynomial into a univariate polynomial.\footnote{Strictly speaking, the term ``symmetrization'' originally referred to the process of averaging a multivariate polynomial over permutations of its inputs, giving rise to a symmetric polynomial \cite{minsky}. We follow the convention of more recent work (e.g.\ \cite{sherstov2,bt2,akkt}), and use use ``symmetrization'' more generally to mean any process of transforming a multivariate polynomial to a univariate polynomial in a symmetry-exploiting, degree non-increasing way.} One typically appeals to classical results in approximation theory to analyze the resulting univariate polynomial. As an example, Paturi \cite{paturi} used symmetrization arguments to tightly characterize the approximate degree of all symmetric Boolean functions (i.e.\ functions that only depend on the Hamming weight of the input). While proofs by symmetrization are usually straightforward and easy to understand, for more complicated functions, symmetrization appears to have limited power in proving lower bounds. Indeed, symmetrization is inherently a lossy technique: a univariate polynomial can only capture part of the behavior of a multivariate polynomial.

Motivated by this difficulty, Ambainis \cite{ambainis} introduced a quantum analogue of the classical adversary method, and used it to prove the first tight lower bound of $\Omega(\sqrt{mn})$ for the quantum query complexity of $\AND_m \circ \OR_n$.\footnote{Note that Ambainis' $\Omega(\sqrt{mn})$ lower bound \cite{ambainis} predated the tight $O(\sqrt{mn})$ upper bound \cite{hmd}, but the $O(\sqrt{mn}\log m)$ upper bound was known at the time.} Still, the $\AND$-$\OR$ tree was the simplest Boolean function for which no tight approximate degree lower bound was known. Aaronson \cite{aaronson} even re-posed the question of resolving $\adeg(\AND_m \circ \OR_n)$ as a challenge problem for developing techniques beyond symmetrization.

After a series of successively tighter lower bounds on the $\AND$-$\OR$ tree (see \Cref{tab:history}), the approximate degree of $\AND_m \circ \OR_n$ was finally resolved in 2013, independently by Sherstov \cite{sherstov} and Bun and Thaler \cite{bt}. They showed that $\adeg(\AND_m \circ \OR_n) = \Omega(\sqrt{mn})$, answering a question that had been open for nearly two decades. Both lower bound proofs used the method of ``dual witnesses'' (or ``dual polynomials''), which involves formulating approximate degree as a linear program, then exhibiting a solution to the dual linear program to prove a lower bound on approximate degree. The dual witness method has the advantage that it can theoretically prove tight lower bounds for every Boolean function, and for general Boolean functions, the technique appears unaviodable. Indeed, many of the recent advances in polynomial approximation lower bounds seem to require such machinery \cite{BKT18,SW19,akkt}. Nevertheless, dual witness proofs tend to be more complicated, and finding and verifying explicit dual witnesses can be difficult (for example, notice in \Cref{tab:history} that it took 4 years to improve the lower bound between \cite{sherstov2} and \cite{bt,sherstov}!). To this date, essentially all tight lower bounds on $\adeg(\AND_m \circ \OR_n)$ rely on the dual formulation of approximate degree in some capacity.\footnote{\label{foot:bbgk}Recently, Ben-David, Bouland, Garg, and Kothari \cite{bbgk} proved that $\adeg(\AND_m \circ f) = \Omega\left(\sqrt{m}\;\adeg(f)\right)$ for any total Boolean function $f$. By a reduction involving polynomials derived from quantum algorithms, they show that $\adeg(\mathsf{XOR}_m \circ f) = O\left(\adeg(\AND_m \circ f)\sqrt{m}\right)$. They then appeal to the known result \cite{sherstov3} that $\adeg(\mathsf{XOR}_m \circ f) = \Omega\left(m \; \adeg(f)\right)$. Though they never construct an explicit dual witness, the lower bound on $\adeg(\mathsf{XOR}_m \circ f)$ relies on dual witnesses in an essential way.}

\begin{table}
\centering
\caption{\label{tab:history}History of lower bounds for $\adeg(\AND_n \circ \OR_n)$.}
\begin{tabular}{lll}
\textbf{Bound} & \textbf{Primary Technique} & \textbf{Reference}\\
\hline
\hline
$\Omega(\sqrt{n})$ & Symmetrization & Nisan and Szegedy \cite{ns}\\
$\Omega(\sqrt{n \log n})$ & Symmetrization & Shi \cite{shi}\\
$\Omega(n^{2/3})$ & Other & Ambainis \cite{ambainis2}\\
$\Omega(n^{3/4})$ & Dual witnesses & Sherstov \cite{sherstov2}\\
$\Omega(n)$ & Dual witnesses & Bun and Thaler \cite{bt}; Sherstov \cite{sherstov}\\
$\Omega(n)$ & Other (See \Cref{foot:bbgk}) & Ben-David, Bouland, Garg, and Kothari \cite{bbgk}\\
$\Omega\left(\frac{n}{\log n}\right)$ & Symmetrization & This paper
\end{tabular}
\end{table}

\subsection{Our Contribution: a Simpler Lower Bound}
In this work, we prove a nearly tight lower bound on the $\AND$-$\OR$ tree using very different techniques. Rather than going through the dual formulation of approximate degree directly, we lower bound the $\AND$-$\OR$ tree via reduction to the $\OR$ function through a series of symmetrization steps. Thus, our proof technique more closely mirrors some of the oldest lower bound proofs for symmetric Boolean functions \cite{minsky, ns, paturi}. Our proof is completely self-contained, with the exception of the lower bound on the $\OR$ function (which is easily proved via the classical Markov brothers' inequality \cite{ns}). Ultimately, we show that $\adeg(\AND_m \circ \OR_n) = \Omega\left(\frac{\sqrt{mn}}{\log m}\right)$, which is tight up to the log factor.

Crucially, our proof relies on a technique due to Aaronson, Kothari, Kretschmer, and Thaler \cite{akkt}: we use \textit{Laurent polynomials} (polynomials that can have both positive and negative exponents) and an associated symmetrization (\Cref{lem:laurent}) that reduces bivariate polynomials to univariate polynomials. In fact, our proof outline is very similar to the lower bound in \cite{akkt} on the one-sided approximate degree of $\mathsf{AND_2} \circ \mathsf{ApxCount}_{N,w}$. At a high level, our proof begins with what is essentially a lower bound on the degree of a ``robust'', partially symmetrized polynomial that approximates $\AND_m \circ \OR_n$. In particular, we show a generalization of the following:

\begin{theorem}[Informal]
Suppose that $p(x_1,\ldots,x_m)$ is a polynomial with the property that for all $(x_1,\ldots,x_m) \in [0, n]^m$:
\begin{enumerate}
\item If $x_i \le \frac{1}{3}$ for some $i$, then $0 \le p(x_1,\ldots,x_m) \le \frac{1}{3}$.
\item If $x_i \ge \frac{2}{3}$ for all $i$, then $\frac{2}{3} \le p(x_1,\ldots,x_m) \le 1$.
\end{enumerate}
Then $\deg(p) = \Omega(\sqrt{mn})$.
\end{theorem}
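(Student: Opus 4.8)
The plan is to reduce the claimed lower bound to the classical fact that $\adeg(\OR_n) = \Omega(\sqrt n)$ (via the Markov brothers' inequality, as in Nisan--Szegedy \cite{ns}). The starting observation is that any polynomial $p(x_1,\ldots,x_m)$ satisfying conditions (1) and (2) on the box $[0,n]^m$ is, in particular, a good approximation to the ``threshold at $\tfrac12$ in each coordinate'' behavior; if we could collapse all $m$ coordinates down to a \emph{single} variable while only losing a $\mathrm{poly}(\log m)$ factor in degree, and keeping the approximation property, we would be left with a univariate polynomial that distinguishes $x \le \tfrac13$ from $x \ge \tfrac23$ on $[0,n]$, and rescaling $x \mapsto nx$ and applying the one-dimensional case would give degree $\Omega(\sqrt n)$ per coordinate — but that only recovers $\Omega(\sqrt n)$, not $\Omega(\sqrt{mn})$. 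So the real work is to \emph{avoid} collapsing all the way: we want to symmetrize $m-1$ of the variables while keeping enough of the $m$-fold structure to extract the factor $\sqrt m$.

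Concretely, the first step is to symmetrize over permutations of $x_1,\ldots,x_m$: replace $p$ by $q(x_1,\ldots,x_m) = \E_{\sigma}[p(x_{\sigma(1)},\ldots,x_{\sigma(m)})]$, which has no larger degree and still satisfies (1) and (2) (both hypotheses are permutation-invariant). Then I would restrict $q$ to inputs of the form $(t, t, \ldots, t, s)$ — that is, set $x_1 = \cdots = x_{m-1} = t$ and $x_m = s$ — obtaining a bivariate polynomial $r(t,s)$ of degree at most $\deg(p)$. By (1) and (2): if $s \ge \tfrac23$ and $t \ge \tfrac23$ then $r(t,s) \in [\tfrac23,1]$; if $s \le \tfrac13$ (regardless of $t$) or $t \le \tfrac13$ (regardless of $s$) then $r(t,s)\in[0,\tfrac13]$. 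Now here is where the Laurent-polynomial symmetrization of \cite{akkt} (\Cref{lem:laurent}) enters: the variable $t$ is ``shared'' among $m-1$ copies, so after an appropriate substitution the relevant object behaves like a polynomial in $t$ of degree $\Theta(\deg(p))$ but whose approximation behavior changes scale only over an interval of width $\sim 1/m$ near $t = 0$ (the region where one of the $m-1$ equal coordinates crosses the threshold matters on a scale shrunk by $m$ once we account for the robustness). Turning the $t$-dependence into a Laurent polynomial lets us handle negative exponents that arise from this rescaling, and applying a Markov-type inequality for Laurent polynomials yields that the $t$-degree is $\Omega(\sqrt m)$, while the $s$-degree is $\Omega(\sqrt n)$ by the one-dimensional argument applied to $r(\tfrac23,\,n\cdot)$; combining the two directions (they concern genuinely different parts of the monomial support, so the degrees add) gives $\deg(p) = \Omega(\sqrt{mn})$.

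The step I expect to be the main obstacle is making precise the claim that the $t$-direction contributes $\Omega(\sqrt m)$: naively, restricting to the diagonal $x_1 = \cdots = x_{m-1} = t$ throws away exactly the information that there are $m-1$ separate coordinates, so one cannot simply apply a univariate bound to $r(t,\cdot)$ and expect $\sqrt m$. The resolution has to come from a genuinely ``robust'' version of the hypothesis — one must argue that, near a carefully chosen base point, a small perturbation in a single $x_i$ (on a scale $1/m$) must already push $p$ from the ``$0$'' regime toward the ``$1$'' regime, so that the effective domain on which $r$ must swing is compressed by a factor of $m$; it is precisely this compression, translated via the Laurent substitution $t \mapsto$ something like $y/m$ (or $y = 1/t$), that converts a degree-$d$ Markov bound into $d = \Omega(\sqrt m)$. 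Getting this robustness statement from conditions (1)–(2) alone — rather than assuming it — is the crux, and I anticipate it requires first passing to a slightly modified polynomial (e.g. averaging $p$ over a small box, or composing with a low-degree ``amplification'' polynomial in the spirit of the robustification lemmas used in \cite{sherstov,bt}) before the symmetrization and Laurent steps go through cleanly. Once that is in hand, assembling the $\sqrt m$ and $\sqrt n$ contributions is routine.
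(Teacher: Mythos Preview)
Your proposal has a genuine gap at exactly the point you flagged: the diagonal restriction $x_1=\cdots=x_{m-1}=t$ irrecoverably discards the factor $\sqrt m$, and the ``robustness'' patch you sketch cannot be extracted from hypotheses (1)--(2).

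Concretely, after symmetrizing and restricting to $(t,\ldots,t,s)$, the bivariate polynomial $r(t,s)$ satisfies, on $[0,n]^2$: if $t\le\tfrac13$ or $s\le\tfrac13$ then $r\in[0,\tfrac13]$; if $t\ge\tfrac23$ and $s\ge\tfrac23$ then $r\in[\tfrac23,1]$. But these are precisely the hypotheses of the \emph{$m=2$ case} of the theorem you are trying to prove. Any argument applied to $r$ alone can at best yield $\deg(r)=\Omega(\sqrt{n})$; the parameter $m$ no longer appears anywhere in the constraints on $r$. Your suggestion that a perturbation on scale $1/m$ in a single $x_i$ ``must already push $p$'' across the threshold is not implied by (1)--(2), which only constrain $p$ when some $x_i\le\tfrac13$ or all $x_i\ge\tfrac23$; nothing is assumed about intermediate values, so there is no $1/m$-scale sensitivity to exploit. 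Separately, the claim that the $t$-degree and $s$-degree ``add'' is unjustified: $\deg r\ge\max(\deg_t r,\deg_s r)$, not their product or sum, absent further structure.

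The paper's argument avoids collapsing to a diagonal. Instead it groups the $m$ variables into $m/2$ \emph{pairs} and applies the Laurent symmetrization (\Cref{lem:laurent}) to each pair $(x_{2i-1},x_{2i})\mapsto t_i$ via the hyperbola $(x_{2i-1},x_{2i})=(bs_i,b/s_i)$, $t_i=s_i+1/s_i$. The key geometric point is that along this hyperbola, large $t_i$ corresponds to one coordinate of the pair being small, while $t_i=2$ corresponds to both equal to $b$. Thus the $\AND$-like condition ``some $x_i$ small $\Rightarrow p\approx 0$'' becomes the $\OR$-like condition ``some $t_i$ large $\Rightarrow q\approx 0$'', and one obtains a polynomial $q(t_1,\ldots,t_{m/2})$ approximating a partially symmetrized $\NOR$. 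Un-symmetrizing each $t_i$ into $\Theta(n)$ Boolean summands then yields an approximation to $\NOR_{\Theta(mn)}$, and the $\Omega(\sqrt{mn})$ bound follows directly from $\adeg(\OR_N)=\Omega(\sqrt N)$. The pairing, not a diagonal restriction, is what preserves the $m$-fold structure.
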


To give intuition, the variables $x_i$ roughly correspond to the Hamming weight of the inputs to each $\OR_n$ gate. Indeed, any polynomial that satisfies the statement of the theorem can be turned into one that approximates $\AND_m \circ \OR_n$ by letting $x_i$ equal the sum of the $\{0,1\}$ inputs to the $i$th $\OR_n$ gate. However, the polynomial is also required to be ``robust'' in the sense the polynomial must behave similarly when $x_i$ is not an integer.

The proof of \Cref{thm:L} works as follows: we group the $m$ variables into $\frac{m}{2}$ pairs and apply the Laurent polynomial symmetrization to each pair. We argue that this has the effect of ``switching'' the role of $\AND$ and $\OR$, in the sense that the resulting polynomial (in $\frac{m}{2}$ variables) looks like a partially symmetrized polynomial that approximates $\mathsf{NOR}_{m/2} \circ \OR_{\Theta(n)} = \mathsf{NOR}_{\Theta(mn)}$, which has approximate degree $\Omega(\sqrt{mn})$ by known lower bounds \cite{ns}.

We then show (\Cref{thm:robust}) that starting with a polynomial that approximates $\AND_m \circ \OR_n$, we can ``robustly symmetrize'' to construct a polynomial of the same degree that behaves like the one in the statement of \Cref{thm:L}, at the cost of a $\log m$ factor in the lower bound on the degree of the polynomial. This polynomial is obtained by applying the ``erase-all-subscripts'' symmetrization (\Cref{lem:subscripts}) to the variables corresponding to each $\OR_n$ gate, producing a polynomial in $m$ variables. This immediately implies (\Cref{cor:main}) that any polynomial that approximates $\AND_m \circ \OR_n$ has degree $\Omega\left(\frac{\sqrt{mn}}{\log m}\right)$.

\section{Preliminaries}
We use $[n]$ to denote the set $\{1,2, \ldots,n\}$, and use $\log$ to denote the natural logarithm. We will need the following two symmetrization lemmas, which were both introduced in these forms in \cite{akkt}. However, \Cref{lem:subscripts} is also a folklore result that previously appeared e.g.\ in \cite{shi} under the name ``linearization'':

\begin{lemma}[Erase-all-subscripts symmetrization \cite{shi, akkt}]\label{lem:subscripts}
Let $p(x_1,\ldots,x_n)$ be a real multilinear polynomial, and for any real number $\rho \in [0, 1]$, let $B\left(n, \rho\right)$ denote the distribution over $\{0,1\}^n$ wherein each coordinate is selected independently to be $1$ with probability $\rho$. Then there exists a real polynomial $q$ with $\deg(q) \le \deg(p)$ such that for all $\mu \in [0,n]$:
$$q(\mu) = \E_{(x_1,\ldots,x_n) \sim B\left(n, \frac{\mu}{n}\right)}[p(x_1,\ldots,x_n)].$$
\end{lemma}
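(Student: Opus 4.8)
The plan is to define $q$ directly by the formula forced on us: $q(\mu) := \E_{x \sim B(n,\mu/n)}[p(x)]$, viewed as a function of the real parameter $\mu \in [0,n]$, and then verify that this is a polynomial in $\mu$ of degree at most $\deg(p)$. First I would use multilinearity of $p$ to write $p(x_1,\ldots,x_n) = \sum_{S \subseteq [n]} c_S \prod_{i \in S} x_i$, where the sum ranges over subsets $S$ with $|S| \le \deg(p)$. Since the coordinates $x_i$ under $B(n,\rho)$ are independent and each has expectation $\rho$, linearity of expectation gives
\begin{equation*}
\E_{x \sim B(n,\rho)}[p(x)] = \sum_{S \subseteq [n]} c_S \prod_{i \in S} \E[x_i] = \sum_{S \subseteq [n]} c_S \, \rho^{|S|}.
\end{equation*}
Substituting $\rho = \mu/n$ yields $q(\mu) = \sum_{S} c_S (\mu/n)^{|S|}$, which is manifestly a univariate polynomial in $\mu$ whose degree is $\max\{|S| : c_S \ne 0\} \le \deg(p)$. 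This already establishes both the polynomial structure and the degree bound, and the defining identity holds by construction for every $\mu \in [0,n]$ (indeed for all real $\mu$, though we only need the stated range).

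The one point that deserves a sentence of care — and the only place the argument could go wrong if stated sloppily — is the reduction to the multilinear case: the lemma hypothesizes that $p$ is \emph{multilinear}, so that $\E[\prod_{i \in S} x_i] = \prod_{i \in S}\E[x_i]$ is legitimate (the expansion has no repeated-variable terms like $x_i^2$ that would need $\E[x_i^2] = \rho \ne \rho^2$). If one wanted to apply the lemma to a non-multilinear polynomial over $\{0,1\}^n$, one would first multilinearize using $x_i^2 = x_i$, which does not increase the degree; but as stated the hypothesis hands us multilinearity for free, so no such step is needed. I expect no real obstacle here — the lemma is essentially a one-line computation once the expectation is pushed through the monomial expansion — so the "main obstacle" is purely expository: making clear that $q$ is a polynomial in the single variable $\mu$ (and not merely in $\rho$) and that the degree is measured in $\mu$, with the $1/n^{|S|}$ factors being harmless constants.
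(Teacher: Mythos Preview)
Your argument is correct and is essentially the same as the paper's: the paper simply writes $q(\mu) = p(\mu/n,\ldots,\mu/n)$ and invokes multilinearity plus linearity of expectation, which is exactly your monomial-by-monomial computation $\E[\prod_{i\in S} x_i] = \rho^{|S|}$ packaged into one line. The only cosmetic difference is that the paper defines $q$ as the substitution (so polynomiality and the degree bound are immediate) and then checks it equals the expectation, whereas you define $q$ as the expectation and then check it is a polynomial.
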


\begin{proof}
Write:
$$q(\mu) = p\bigg(\underbrace{\frac{\mu}{n},\frac{\mu}{n},\ldots,\frac{\mu}{n}}_{n \text{ times}}\bigg).$$
Then the lemma follows from linearity of expectation, because $p$ is assumed to be multilinear.
\end{proof}

\begin{lemma}[Laurent polynomial symmetrization \cite{akkt}]\label{lem:laurent}
Let $p(x_1,x_2)$ be a real polynomial that is symmetric (i.e.\ $p(x_1,x_2) = p(x_2,x_1)$ for all $x_1, x_2$). Then there exists a real polynomial $q(t)$ with $\deg(q) \le \deg(p)$ such that for all real $s$, $q(s + 1/s) = p(s, 1/s)$.
\end{lemma}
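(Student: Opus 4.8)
The plan is to exploit the fact that a symmetric bivariate polynomial is a polynomial in the elementary symmetric functions $e_1 = x_1 + x_2$ and $e_2 = x_1 x_2$, together with the observation that the substitution $x_1 = s$, $x_2 = 1/s$ forces $e_2 = 1$, leaving only a polynomial in $e_1 = s + 1/s$. Concretely, I would first note that $p(s, 1/s)$ is a Laurent polynomial in $s$: writing $p(x_1, x_2) = \sum_{a,b} c_{a,b}\, x_1^a x_2^b$ with $a + b \le d := \deg(p)$, the substitution sends each monomial $x_1^a x_2^b$ to $s^{a-b}$, and since $|a - b| \le a + b \le d$, every exponent that appears lies in $\{-d, -d+1, \ldots, d\}$. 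Thus $p(s, 1/s) = \sum_{k=-d}^{d} \gamma_k s^k$ for suitable reals $\gamma_k$.

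Next I would use the symmetry of $p$. Since $p(x_1, x_2) = p(x_2, x_1)$, we have $p(s, 1/s) = p(1/s, s)$, which is the same Laurent polynomial with $s$ replaced by $1/s$; comparing coefficients gives $\gamma_k = \gamma_{-k}$ for all $k$. Hence
$$p(s, 1/s) = \gamma_0 + \sum_{k=1}^{d} \gamma_k \left(s^k + s^{-k}\right).$$
The key elementary fact is then: for each $k \ge 0$, the symmetric Laurent polynomial $s^k + s^{-k}$ equals $r_k(t)$ for a univariate polynomial $r_k$ of degree exactly $k$ in $t := s + 1/s$. This follows by induction from $r_0(t) = 2$, $r_1(t) = t$, and the recurrence $r_{k+1}(t) = t\, r_k(t) - r_{k-1}(t)$, which is just $(s + 1/s)(s^k + s^{-k}) = (s^{k+1} + s^{-(k+1)}) + (s^{k-1} + s^{-(k-1)})$ rearranged (these $r_k$ are, up to normalization, the Chebyshev polynomials under $s = e^{i\theta}$).

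Finally I would set $q(t) := \gamma_0 + \sum_{k=1}^{d} \gamma_k\, r_k(t)$. Since each $r_k$ has degree $k \le d$, we get $\deg(q) \le d = \deg(p)$, and by construction $q(s + 1/s) = \gamma_0 + \sum_{k=1}^{d} \gamma_k (s^k + s^{-k}) = p(s, 1/s)$ for every nonzero $s$ (equivalently, as an identity of rational functions). I do not expect any genuine obstacle; the only point requiring a moment of care is the degree bookkeeping — confirming both that the Laurent exponents of $p(s,1/s)$ are bounded in absolute value by $\deg(p)$ (which uses only $|a-b| \le a+b$) and that replacing each $s^k + s^{-k}$ by $r_k(t)$ does not raise the degree, since $\deg r_k = k$. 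An essentially equivalent alternative is to invoke the fundamental theorem of symmetric polynomials directly: write $p = P(e_1, e_2)$ where $P$ has weighted degree at most $d$ under the assignment $\mathrm{wt}(e_1) = 1$, $\mathrm{wt}(e_2) = 2$, and take $q(t) = P(t, 1)$, whose ordinary degree is then at most $d$.
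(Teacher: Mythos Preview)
Your proposal is correct and follows essentially the same route as the paper: write $p(s,1/s)$ as a symmetric Laurent polynomial, pair the $s^k$ and $s^{-k}$ coefficients, and then prove by induction that each $s^k+s^{-k}$ is a degree-$k$ polynomial in $s+1/s$. The only cosmetic difference is that you use the Chebyshev-style recurrence $r_{k+1}=t\,r_k-r_{k-1}$ for that last step, whereas the paper expands $(s+1/s)^k$ and subtracts lower-order terms; the paper also remarks (as you do) on the alternative via the fundamental theorem of symmetric polynomials.
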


\begin{proof}
Write $p(s, 1/s)$ as a Laurent polynomial $\ell(s)$ (a polynomial in $s$ and $1/s$). Because $p$ is symmetric, we have that $\ell(s) = p(s, 1/s) = p(1/s, s) = \ell(1/s)$. This implies that the coefficients of the $s^i$ and $s^{-i}$ terms of $\ell(s)$ are equal for all $i$, as otherwise $\ell(s) - \ell(1/s)$ would not be identically zero. Write $\ell(s) = \sum_{i=0}^d a_i \cdot (s^i + s^{-i})$ for some coefficients $a_i$, where $d \le \deg(p)$. Then, it suffices to show that $s^i + s^{-i}$ can be expressed as a real polynomial of degree $i$ in $s + 1/s$ for all $0 \le i \le d$.

We prove by induction on $i$. The case $i = 0$ is a constant polynomial. For $i > 0$, observe that $(s + 1/s)^i = s^i + s^{-i} + r(s)$, where $r(s)$ is some real Laurent polynomial of degree $i - 1$ satisfying $r(s) = r(1/s)$. In particular, we have that $s^i + s^{-i} = (s + 1/s)^i - r(s)$, where the right side of the equation is a real polynomial of degree $i$ in $s + 1/s$ (by the induction assumption).
\end{proof}

We remark that \Cref{lem:laurent} can also be viewed as a consequence of the fundamental theorem of symmetric polynomials,\footnote{Indeed, our proof even mirrors the standard proof of the fundamental theorem of symmetric polynomials.} which states that every symmetric polynomial in $n$ variables can be expressed uniquely as a polynomial in the elementary symmetric polynomials in $n$ variables. In 2 variables, the elementary symmetric polynomials are $x+y$ and $xy$. So, if $p(x, y) = q(x + y, xy)$, then restricting $p$ to the set $\{(s, 1/s): s \in \Reals\} = \{(x, y): xy = 1\}$ and writing in terms of $s$ corresponds to taking $q(s + 1/s, 1)$, which is just a polynomial in $s + 1/s$. (Of course, one would also have to show that this transformation can be applied in a degree-preserving way, as the proof of \Cref{lem:laurent} does).

Note also that \Cref{lem:subscripts} and \Cref{lem:laurent} can be applied to polynomials with additional variables, because of the isomorphism between the polynomial rings $\mathbb{R}[x_1,\ldots,x_n]$ and $\mathbb{R}[x_1,\ldots,x_k][x_{k+1},\ldots,x_n]$. For example, the Laurent polynomial symmetrization can be applied more generally to any polynomial $p(x_1,x_2,\ldots,x_n)$ that is symmetric in $x_1$ and $x_2$ by rewriting $p$ as a sum of the form:
\[p(x_1,x_2,\ldots,x_n) = \sum_i f_i(x_1, x_2) \cdot g_i(x_3,\ldots,x_n)\]
where $\{f_i\}$ and $\{g_i\}$ are sets of polynomials and the $f_i$s are all symmetric. Then, symmetrizing each $f_i$ according to \Cref{lem:laurent} gives a polynomial $q(s + 1/s, x_3, \ldots, x_n) = p(s, 1/s, x_3, \ldots, x_n)$.

Finally, we note the tight characterization of the approximate degree of $\OR$ and $\AND$:

\begin{lemma}[\cite{ns}]\label{lem:adeg_or}
$\adeg(\OR_n) = \adeg(\AND_n) = \Theta(\sqrt{n})$.
\end{lemma}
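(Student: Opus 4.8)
The plan is to handle the $\AND$ and $\OR$ cases together and to bound the approximate degree from above and below separately. For the equivalence $\adeg(\AND_n)=\adeg(\OR_n)$, observe that $\AND_n(x)=1-\OR_n(1-x_1,\ldots,1-x_n)$, so a degree-$d$ polynomial $p$ approximating $\AND_n$ yields the degree-$d$ polynomial $q(y):=1-p(1-y_1,\ldots,1-y_n)$ approximating $\OR_n$ (with the two error windows interchanged), and symmetrically; hence it suffices to analyze $\OR_n$. For the upper bound $\adeg(\OR_n)=O(\sqrt n)$, I would construct a univariate polynomial $q$ of degree $d=O(\sqrt n)$ with $q(0)$ near $0$ and $q(k)$ near $1$ for every integer $k\in[n]$; then $q(x_1+\cdots+x_n)$, reduced modulo the relations $x_i^2=x_i$, is a multilinear polynomial of degree $\le d$ approximating $\OR_n$ on $\{0,1\}^n$. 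To build $q$, map $[1,n]$ affinely onto $[-1,1]$, which sends $0$ to the point $-1-\Theta(1/n)$ just outside the interval; since the Chebyshev polynomial $T_d$ satisfies $|T_d|\le 1$ on $[-1,1]$ while $|T_d(1+\Theta(1/n))|=\exp(\Omega(d/\sqrt n))$, taking $d=\Theta(\sqrt n)$ with a large enough constant makes $|T_d|$ at the image of $0$ a large constant. Then $q(x):=1-T_d(\text{affine image of }x)/T_d(\text{image of }0)$, followed by a harmless constant rescaling to fit the error windows, has the required behavior.

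For the lower bound $\adeg(\OR_n)=\Omega(\sqrt n)$, I would take a degree-$d$ multilinear polynomial $p$ approximating $\OR_n$ and symmetrize it by averaging over all coordinate permutations in $S_n$; since $\OR_n$ is symmetric this preserves both the approximation and the degree bound, and a symmetric multilinear polynomial on $\{0,1\}^n$ equals $q(x_1+\cdots+x_n)$ for a univariate $q$ of degree $\le d$ (Minsky--Papert symmetrization). Thus $q(0)\in[0,\alpha]$ and $q(k)\in[\beta,1]$ for every integer $k\in[n]$, so $|q(k)|\le 1$ at all $n+1$ integer points of $[0,n]$, and applying the mean value theorem to $q(1)-q(0)\ge\beta-\alpha$ produces a point $\xi\in(0,1)$ with $|q'(\xi)|\ge\beta-\alpha$. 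If $d\ge\sqrt n$ we are already done; otherwise $d<\sqrt n$, and the classical Ehlich--Zeller / Rivlin--Cheney bound says that a degree-$d$ polynomial bounded by $1$ at the integers $0,1,\ldots,n$ is bounded by an absolute constant $C$ on all of $[0,n]$. The Markov brothers' inequality, rescaled to the interval $[0,n]$, then gives $\beta-\alpha\le|q'(\xi)|\le\max_{[0,n]}|q'|\le\frac{2d^2}{n}\max_{[0,n]}|q|\le\frac{2Cd^2}{n}$, whence $d=\Omega(\sqrt n)$.

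The only step that is more than a routine calculation is the Ehlich--Zeller / Rivlin--Cheney boundedness fact used in the lower bound: symmetrization pins down $q$ only at the integer points, and a priori a polynomial of degree much larger than $\sqrt n$ could oscillate wildly between consecutive integers, so without this fact the Markov estimate has nothing to grab onto. That fact is itself a consequence of a higher-derivative (``brothers'') form of the Markov inequality, and the case split on whether $d\ge\sqrt n$ is exactly what keeps the argument from becoming circular. Everything else --- the De Morgan duality, the Minsky--Papert symmetrization, the Chebyshev construction, and the final Markov estimate --- is standard and short.
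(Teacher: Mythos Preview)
The paper does not actually prove this lemma: it is stated with a citation to Nisan--Szegedy and no proof is given. The only hint the paper offers is the remark in the introduction that the lower bound on $\OR$ ``is easily proved via the classical Markov brothers' inequality.''

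Your proposal is correct and is essentially the standard Nisan--Szegedy argument. The De~Morgan reduction between $\AND_n$ and $\OR_n$, the Chebyshev construction for the upper bound, the Minsky--Papert symmetrization, and the Ehlich--Zeller/Rivlin--Cheney boundedness step followed by Markov's inequality are all exactly what the cited reference does, and your case split on $d\gtrless c\sqrt{n}$ is the usual way to keep the boundedness hypothesis honest. One small cosmetic point: the precise Ehlich--Zeller/Rivlin--Cheney statement needs a specific constant in the threshold (something like $d^2\le n/2$ rather than $d<\sqrt n$), so the case split should really be at $d\ge c_0\sqrt n$ for the appropriate $c_0$; this changes nothing in the asymptotics. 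Since the paper itself omits the proof entirely and only gestures at Markov's inequality, there is nothing further to compare.
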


\section{Main Result}
We begin with the following theorem, which essentially lower bounds the degree of a robust, partially symmetrized polynomial that approximates $\AND_m \circ \OR_n$. Note that the case with $a$ and $b$ constant and $m = 2$ was essentially proved in \cite{akkt}.

\begin{theorem}\label{thm:L}
Let $0 < \alpha < \beta < 1$ be arbitrary constants, and let $0 < a < b < n$ such that $\frac{b}{a} < \frac{n}{b}$ also holds. Let $m \ge 2$, and suppose that $p(x_1,\ldots,x_m)$ is a polynomial with the property that for all $(x_1,\ldots,x_m) \in [0, n]^m$:
\begin{enumerate}
\item If $x_i \le a$ for some $i$, then $0 \le p(x_1,\ldots,x_m) \le \alpha$.
\item If $x_i \ge b$ for all $i$, then $\beta \le p(x_1,\ldots,x_m) \le 1$.
\end{enumerate}
Then $\deg(p) = \Omega\left(\frac{n - b}{b - a}\sqrt{\frac{a m}{n}}\right)$.
\end{theorem}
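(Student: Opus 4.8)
### Proof Proposal

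The plan is to follow the outline the authors sketched in the introduction: reduce to $\mathsf{NOR}$ by pairing up the variables and applying the Laurent symmetrization of \Cref{lem:laurent} to each pair. First I would assume $m$ is even (handling odd $m$ by simply dropping one variable, i.e.\ fixing $x_m$ to some large value, which only decreases $m$ by one and costs nothing asymptotically). Group the coordinates into $m/2$ disjoint pairs, say $(x_{2k-1}, x_{2k})$ for $k \in [m/2]$. For each pair I want to restrict to the curve $x_{2k-1} = c\,s_k$, $x_{2k} = c/s_k$ for an appropriate scaling constant $c$, so that the product $x_{2k-1} x_{2k} = c^2$ is pinned. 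The point of choosing $c$ is that I need both coordinates of each pair to range over $[0,n]$ and I need to be able to realize the ``small'' regime ($x_i \le a$) and the ``both large'' regime ($x_i \ge b$) on this curve; the hypothesis $\frac{b}{a} < \frac{n}{b}$ is exactly what guarantees such a $c$ exists — roughly $c^2 \approx bn$ or $c \in (b, n/b \cdot \text{something})$, so that when $s_k$ is moderate both coordinates exceed $b$, and when $s_k$ is extreme one coordinate drops below $a$ while the other stays below $n$. I expect the precise choice to be $c = \sqrt{ab}\cdot(\text{const})$ or similar, and this bookkeeping will need care.

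Next, since $p$ need not be symmetric in each pair, I first symmetrize: replace $p$ by $\tilde p(x_1,\ldots,x_m) = $ the average of $p$ over all $2^{m/2}$ ways of swapping coordinates within pairs. This is degree non-increasing and makes $\tilde p$ symmetric in each pair $(x_{2k-1}, x_{2k})$ while preserving properties (1) and (2) (property (1) is preserved because swapping within a pair does not change whether \emph{some} coordinate is $\le a$; property (2) likewise). Now apply \Cref{lem:laurent} (in the multivariate form noted after its proof) successively to each pair, obtaining a polynomial $q(y_1,\ldots,y_{m/2})$ with $\deg(q) \le \deg(\tilde p) \le \deg(p)$ such that $q(c s_1 + c/s_1, \ldots, c s_{m/2} + c/s_{m/2})$ equals $\tilde p$ evaluated on the curve. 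Writing $y_k = c(s_k + 1/s_k)$, note $y_k$ ranges over $[2c, \infty)$ as $s_k$ ranges over positive reals, with $y_k = 2c$ corresponding to $s_k = 1$ (the ``balanced, both-large'' point) and large $y_k$ corresponding to $s_k$ extreme (one coordinate small). I then reparametrize $q$ by a bounded affine change of variables to land on a clean domain $[0, M]^{m/2}$ for suitable $M$.

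The resulting $q$ should satisfy: if $y_k$ is large (above some threshold $\tau$) for some $k$, then one coordinate of that pair was $\le a$, so $q$ is small ($\le \alpha$); if $y_k$ is small (near $2c$) for all $k$, then all coordinates were $\ge b$, so $q$ is large ($\ge \beta$). Up to negating the variables this is precisely the defining property of a polynomial approximating $\mathsf{NOR}_{m/2}$ on a rescaled hypercube of side-length $\Theta(\tau - 2c)$ — equivalently, after composing with the map $z \mapsto z$ summing the $m/2$ blocks, a symmetrized approximant to $\mathsf{NOR}$ on $\Theta(m \cdot (\tau - 2c)) = \Theta\big(m \cdot \frac{(n-b)(b-a)}{?}\big)$ bits. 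Applying \Cref{lem:adeg_or} (which gives $\adeg(\mathsf{NOR}_N) = \adeg(\mathsf{OR}_N) = \Theta(\sqrt N)$) together with the erase-all-subscripts symmetrization \Cref{lem:subscripts} to collapse to a single univariate variable yields $\deg(q) = \Omega(\sqrt{N})$ for the appropriate $N$, hence the claimed bound on $\deg(p)$.

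The main obstacle will be the quantitative bookkeeping in the second-to-last step: tracking how the interval of ``good'' $y_k$ values (those forcing all coordinates $\ge b$) and the threshold $\tau$ (forcing some coordinate $\le a$) translate, under the nonlinear map $s \mapsto c(s+1/s)$ and its linearization, into the \emph{width} of the window over which $q$ must transition from $\ge\beta$ to $\le\alpha$. This width determines the effective number of bits $N$, and getting it to come out as $\Theta\!\big(m \cdot \frac{a}{n}\cdot\big(\frac{n-b}{b-a}\big)^2\big)$ — so that $\sqrt N = \Theta\big(\frac{n-b}{b-a}\sqrt{am/n}\big)$ — requires choosing the scaling constant $c$ carefully and estimating $s + 1/s$ near its minimum, where the map is quadratically flat, versus in the regime where one coordinate hits $a$. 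I also need to verify that \Cref{lem:subscripts} and the $\mathsf{OR}$ lower bound can be invoked in the ``robust'' (non-integer, rescaled-domain) setting, which should follow since the $\mathsf{OR}$ lower bound via the Markov brothers' inequality is insensitive to affine rescaling of the domain.
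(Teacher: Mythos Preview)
Your approach is essentially the paper's: symmetrize, pair the variables, apply the Laurent symmetrization (\Cref{lem:laurent}) to each pair, then reduce to $\mathsf{NOR}$ and invoke \Cref{lem:adeg_or}. Two concrete pointers where your bookkeeping guesses drift from what actually works: (i) the scaling constant should be $c=b$ (not $\sqrt{bn}$ or $\sqrt{ab}$), so that $s_k=1$ lands both coordinates exactly at $b$, while $s_k=b/a$ pushes one coordinate down to $a$ and the other up to $b^2/a<n$ by the hypothesis $b/a<n/b$; and (ii) for the final step you cannot use \Cref{lem:subscripts} on $q$ since $q$ is not multilinear---instead the paper goes the \emph{other} direction and ``un-symmetrizes,'' writing each (affinely shifted) variable $\bar t_i\in[0,k]$ as a sum $\bar t_{i,1}+\cdots+\bar t_{i,\lfloor k\rfloor}$ of $\{0,1\}$-valued variables, which directly yields an approximant to $\mathsf{NOR}_{m\lfloor k\rfloor/2}$ with $k=\dfrac{a(n-b)^2}{n(b-a)^2}$.
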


We first give a more intuitive, high-level overview of the proof of \Cref{thm:L}. For a symmetric polynomial $p(x_1, x_2)$, the Laurent polynomial symmetrization yields a polynomial $q(t)$ that captures the behavior of $p$ restricted to a hyperbola in which $x_1$ and $x_2$ are inversely proportional. Larger $t$ correspond to points on the hyperbola that are further from the origin, and thus correspond to points where one of $x_1$ or $x_2$ is small and the other is large. Smaller $t$ correspond to points on the hyperbola where both $x_1$ and $x_2$ are reasonably large.

The polynomial $p(x_1, \ldots, x_m)$ that we start with has the property that if any input $x_i$ is \textit{small} (less than $a$), then the polynomial should be close to 0. Otherwise, if all inputs are \textit{large} (greater than $b$), then the polynomial should be close to 1. The key observation is that the Laurent polynomial symmetrization essentially reverses the role of small and large inputs. The idea is to group the inputs of $p$ into $m/2$ pairs and apply the Laurent polynomial symmetrization to each pair, resulting in a polynomial $q(t_1,\ldots,t_{m/2})$ in half as many variables. Now, when some input $t_i$ to $q$ is large, this corresponds to some pair of variables in $p$ where one is small and one is large. Conversely, when all $t_i$s are are small, this corresponds to the case where all inputs to $p$ are reasonably large. As a result, if any input $t_i$ to $q$ is \textit{large}, then $q$ should be close to $0$, and otherwise, if all inputs are \textit{small}, then $q$ should be close to 1---precisely the reverse of $p$. For additional clarification, a diagram of these steps is shown in \Cref{fig:L}.

It remains to lower bound the degree of $q$. We do so by observing that $q$ can be turned into a polynomial that approximates the $\NOR$ function, whose approximate degree we know from \Cref{lem:adeg_or}.

\begin{figure}
\centering
\begin{tikzpicture}[y=0.28cm, x=0.28cm]
	\pgfmathsetmacro{\N}{20};
	\pgfmathsetmacro{\w}{3};
	\pgfmathsetmacro{\tw}{6};
	\pgfmathsetmacro{\Np}{\N+1};

	\fill[gray,opacity=.4] (0,0) -- (0,\N) -- (\w,\N) -- (\w,\w) -- (\N,\w) -- (\N,0);
	\fill[gray,opacity=.1] (\tw,\tw) -- (\tw,\N) -- (\N,\N) -- (\N,\tw);

	\draw[->] (0,0) -- coordinate (x axis mid) (\Np,0);
    \draw[->] (0,0) -- coordinate (y axis mid) (0,\Np);
    	
    \draw (0,1pt) -- (0,-3pt) node[anchor=north] {0};
    \draw (\w,1pt) -- (\w,-3pt) node[anchor=north] {$a$\vphantom{$b$}};
    \draw (\tw,1pt) -- (\tw,-3pt) node[anchor=north] {$b$};
    \draw (\N,1pt) -- (\N,-3pt) node[anchor=north] {$n$\vphantom{$b$}};

    \draw (1pt,0) -- (-3pt,0) node[anchor=east] {0};
    \draw (1pt,\w) -- (-3pt,\w) node[anchor=east] {$a$};
    \draw (1pt,\tw) -- (-3pt,\tw) node[anchor=east] {$b$};
    \draw (1pt,\N) -- (-3pt,\N) node[anchor=east] {$n$};

	\node[below=0.8cm] at (x axis mid) {$x_1$};
	\node[left=0.8cm] at (y axis mid) {$x_2$};
	\node at (\N/2, \w/2) {$0 \le p(x_1, x_2) \le \alpha$};
	\node at (\w+\N/2, \w+\N/2) {$\beta \le p(x_1, x_2) \le 1$};
	
	\draw[<-,thick] (\tw,\tw) -- (\tw+2,\tw+2) node[anchor=south] {$s_1=1$};
	\node at (\tw,\tw) {\tiny\textbullet}; 	
	\draw[<-,thick] (\w*4,\w) -- (\w*4+2,\w+2) node[anchor=south] {$s_1=\frac{b}{a}$};
	\node at (\w*4,\w) {\tiny\textbullet};
	\draw[<-,thick] (\N,4*\w*\w/\N) -- (\N+2,2+4*\w*\w/\N) node[anchor=south] {$s_1=\frac{n}{b}$};
	\node at (\N,4*\w*\w/\N) {\tiny\textbullet};
	\draw[<-,thick] (\w,\w*4) -- (\w+2,\w*4+2) node[anchor=south] {$s_1=\frac{a}{b}$};
	\node at (\w,\w*4) {\tiny\textbullet};
	\draw[<-,thick] (4*\w*\w/\N,\N) -- (4*\w*\w/\N+2,\N+2) node[anchor=south] {$s_1=\frac{b}{n}$};
	\node at (4*\w*\w/\N,\N) {\tiny\textbullet};
	
	\draw[<-,thick] (0.68*\w,4/0.68*\w) -- (0.68*\w+4.5,4/0.68*\w+3.5) node[anchor=west] {\color{blue}$(x_1=bs_1,x_2=b/s_1)$};

     \draw[thick,domain=1:{\N/(2*\w)},smooth,variable=\t,blue] plot ({2*\w*\t},{2*\w/\t});
     \draw[thick,domain=1:{\N/(2*\w)},smooth,variable=\t,blue] plot ({2*\w/\t},{2*\w*\t});

\end{tikzpicture}
\begin{tikzpicture}[y=0.28cm, x=0.28cm]
	\pgfmathsetmacro{\N}{20};
	\pgfmathsetmacro{\w}{3};
	\pgfmathsetmacro{\tw}{6};
	\pgfmathsetmacro{\Np}{\N+1};

	\fill[gray,opacity=.4] (\tw,\tw) -- (\tw,\w) -- (\N,\w) -- (\N,\N) -- (\w,\N) -- (\w,\tw);

	\draw[->] (0,0) -- coordinate (x axis mid) (\Np,0);
    \draw[->] (0,0) -- coordinate (y axis mid) (0,\Np);
    	
    \draw (0,1pt) -- (0,-3pt) node[anchor=north] {0};
    \draw (\w,1pt) -- (\w,-3pt) node[anchor=north] {2};
    \draw (\tw,1pt) -- (\tw,-3pt) node[anchor=north] {$\frac{b}{a} + \frac{a}{b}$};
    \draw (\N,1pt) -- (\N,-3pt) node[anchor=north] {$\frac{n}{b} + \frac{b}{n}$};

    \draw (1pt,0) -- (-3pt,0) node[anchor=east] {0};
    \draw (1pt,\w) -- (-3pt,\w) node[anchor=east] {2};
    \draw (1pt,\tw) -- (-3pt,\tw) node[anchor=east] {$\frac{b}{a} + \frac{a}{b}$};
    \draw (1pt,\N) -- (-3pt,\N) node[anchor=east] {$\frac{n}{b} + \frac{b}{n}$};

	\node[below=0.8cm] at (x axis mid) {$t_1$};
	\node[left=0.8cm] at (y axis mid) {$t_2$};
	\node at (\N/2+\w/2, \N/2+\w/2) {$0 \le q(t_1,t_2) \le \alpha$};
	
	\node at (\w, \w) {\tiny\textbullet};	
	
	\draw[<-,thick] (\w,\w) -- (\tw,0.5*\w) node[anchor=west] {$\beta \le q(t_1,t_2) \le 1$};

\end{tikzpicture}
\caption{\label{fig:L}Left: $p(x_1,x_2)$ for the case $m=2$. Right: $q(t_1,t_2)$ for the case $m=4$.}
\end{figure}
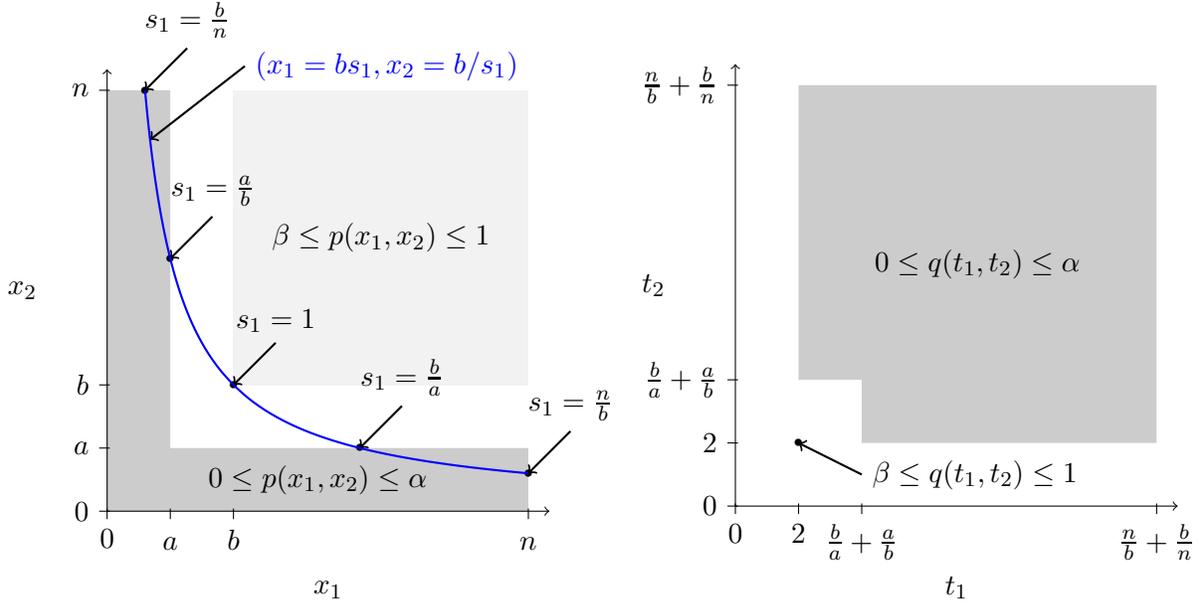

\begin{proof}
Assume that $m$ is even, since we can always set $x_m = b$ and consider the polynomial $p(x_1,\ldots,x_{m-1},b)$ on $m - 1$ variables instead. Assume without loss of generality that $p$ is symmetric in $x_1,\ldots,x_m$, because we can always replace $p$ by its average over all $m!$ permutations of the inputs. Group the variables into $m/2$ pairs $\{(x_{2i-1}, x_{2i}) : i \in [m/2]\}$ and apply the Laurent polynomial symmetrization (\Cref{lem:laurent}) to each pair to obtain a polynomial $q(t_1,\ldots,t_{m/2})$ in corresponding variables $\{t_i: i \in [m/2]\}$ with $\deg(q) \le \deg(p)$. We think of $t_i = s_i + 1/s_i$ as corresponding to the restriction $(x_{2i-1} = bs_i, x_{2i} = b/s_i)$ (note that this involves rescaling each input by $b$ in applying \Cref{lem:laurent}). Then we observe that for all $(t_1,\ldots,t_{m/2}) \in \left[2, \frac{n}{b} + \frac{b}{n}\right]^{m/2}$:
\begin{enumerate}
\item $0 \le q(t_1,\ldots,t_{m/2}) \le \alpha$ when $t_i \ge \frac{b}{a} + \frac{a}{b}$ for some $i$, as $t_i \ge \frac{b}{a} + \frac{a}{b}$ corresponds to either $s_i \ge \frac{b}{a}$ or $s_i \le \frac{a}{b}$, which corresponds to either $x_{2i} \le a$ or $x_{2i - 1} \le a$, respectively. (This is where we need the assumption $\frac{n}{b} > \frac{b}{a}$, as otherwise this case never holds).
\item $\beta \le q(t_1,\ldots,t_{m/2}) \le 1$ when $t_i = 2$ for all $i$, as $t_i = 2$ corresponds $s_i = 1$, which corresponds to $x_{2i - 1} = x_{2i} = b$.
\end{enumerate}
Perform an affine shift of $q$ with $\bar{t}_i = (t_i - 2)\frac{ab}{(b - a)^2}$ to obtain $\bar{q}(\bar{t}_1,\ldots,\bar{t}_{m/2}) = q(t_1,\ldots,t_{m/2})$. The reason for this choice is for convenience, so that the cutoffs on $t_i$ for the inequalities above become 1 and 0, respectively. It is easiest to see this by using the identity $\frac{b}{a} + \frac{a}{b} - 2 = \frac{(b-a)^2}{ab}$. Let $k = \left(\frac{n}{b} + \frac{b}{n} - 2\right)\frac{ab}{(b-a)^2} = \frac{a(n-b)^2}{n(b-a)^2}$, so that $t_i = \frac{n}{b} + \frac{b}{n}$ corresponds to $\bar{t}_i = k$. Note that $k \ge 1$ because the affine transformation is monotone, and $t_i = \frac{b}{a} + \frac{a}{b}$ corresponds to $\bar{t}_i = 1$. Then we observe that for all $(\bar{t}_1,\ldots,\bar{t}_{m/2}) \in \left[0,k\right]^{m/2}$:
\begin{enumerate}
\item $0 \le \bar{q}(\bar{t}_1,\ldots,\bar{t}_{m/2}) \le \alpha$ when $\bar{t}_i \ge 1$ for some $i$, as $\bar{t}_i \ge 1$ corresponds to $t_i \ge \frac{b}{a} + \frac{a}{b}$.
\item $\beta \le \bar{q}(\bar{t}_1,\ldots,\bar{t}_{m/2}) \le 1$ when $\bar{t}_i = 0$ for all $i$, as $\bar{t}_i = 0$ corresponds $t_i = 2$.
\end{enumerate}
Notice that $\bar{q}$ approximates a partially symmetrized $\NOR$ function. Now, we ``un-symmetrize'' $\bar{q}$. Let $\bar{t}_i = \bar{t}_{i,1} + \bar{t}_{i,2} + \ldots + \bar{t}_{i,\lfloor k \rfloor}$. Then $\bar{q}(\bar{t}_{1, 1} + \ldots + \bar{t}_{1,\lfloor k \rfloor}, \bar{t}_{2, 1} + \ldots + \bar{t}_{2,\lfloor k \rfloor}, \ldots, \bar{t}_{m/2,1} + \ldots \bar{t}_{m/2,\lfloor k \rfloor})$ approximates $\NOR_{m \lfloor k \rfloor/2}$ over the variables $(\bar{t}_{1,1},\ldots,\bar{t}_{m/2,\lfloor k \rfloor}) \in \{0,1\}^{m \lfloor k \rfloor/2}$. Since we know that $\adeg(\NOR_{m\lfloor k \rfloor/2}) = \adeg(\OR_{m\lfloor k \rfloor/2}) = \Omega(\sqrt{mk})$ (\Cref{lem:adeg_or}), and since this construction satisfies $\deg(\bar{q}) = \deg(q) \le \deg(p)$, we conclude that $\deg(p) = \Omega\left(\frac{n - b}{b - a}\sqrt{\frac{a m}{n}}\right)$.
\end{proof}

Next, we show that a polynomial that approximates $\AND_m \circ \OR_n$ can be ``robustly symmetrized'' like in the statement of \Cref{thm:L} with $a = O(1)$ and $b = O(\log m)$.

\begin{theorem}
\label{thm:robust}
Let $p(x_{1,1},\ldots,x_{m,n})$ be a polynomial in variables $\{x_{i,j} : (i, j) \in [m] \times [n]\}$ that $\frac{1}{3}$-approximates $\AND_m \circ \OR_n$, where $n > 2 \log m$ and $m \ge 10$. Specifically, we assume that $0 \le p(x_{1,1},\ldots,x_{m,n}) \le \frac{1}{3}$ on a $0$-instance, and $\frac{2}{3} \le p(x_{1,1},\ldots,x_{m,n}) \le 1$ on a $1$-instance. Then there exists a polynomial $q(x_1,\ldots,x_m)$ with $\deg(q) \le \deg(p)$ such that for all $(x_1,\ldots,x_m) \in [0, n]^m$:
\begin{enumerate}
\item If $x_i \le \frac{1}{6}$ for some $i$, then $0 \le q(x_1,\ldots,x_m) \le \frac{1}{2}$.
\item If $x_i \ge 2\log m$ for all $i$, then $\frac{3}{5} \le q(x_1,\ldots,x_m) \le 1$.
\end{enumerate}
\end{theorem}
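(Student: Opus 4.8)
The plan is to build $q$ from $p$ by applying the erase-all-subscripts symmetrization (\Cref{lem:subscripts}) to each of the $m$ blocks of $n$ variables separately. First I would replace $p$ by its multilinearization, which agrees with $p$ on $\{0,1\}^{mn}$ and has no larger degree, so that \Cref{lem:subscripts} applies and the approximation hypothesis (which concerns only Boolean inputs) is unaffected. Then, applying \Cref{lem:subscripts} to block $i$ for each $i \in [m]$ in turn --- carrying the variables of the other blocks along as ``extra'' variables, as permitted by the remark following \Cref{lem:laurent}, and noting that each such step preserves multilinearity in the blocks not yet processed --- yields $q(x_1,\ldots,x_m)$ with $\deg(q)\le\deg(p)$ and, for all $(x_1,\ldots,x_m)\in[0,n]^m$,
\[ q(x_1,\ldots,x_m) \;=\; \E\big[\,p(X)\,\big], \]
where $X\in\{0,1\}^{mn}$ is sampled by setting each coordinate of block $i$ to $1$ independently with probability $x_i/n$. (Equivalently, $q$ is just $p$ with $x_i/n$ substituted for every variable of block $i$.) Since $p\in[0,1]$ on $\{0,1\}^{mn}$, we get $q\in[0,1]$ on $[0,n]^m$ for free, which already supplies the ``$0\le$'' in item 1 and the ``$\le 1$'' in item 2; only the two remaining inequalities need work, and both come from standard tail bounds.

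For item 1, fix $i$ with $x_i\le\tfrac16$ and split $\E[p(X)]$ on the event that block $i$ of $X$ is identically zero. On that event $\OR_n$ of block $i$ outputs $0$, so $\AND_m\circ\OR_n(X)=0$ and $p(X)\le\tfrac13$; off it I would use only $p(X)\le 1$. This gives
\[ q(x_1,\ldots,x_m)\;\le\;\tfrac13+\Big(1-\big(1-\tfrac{x_i}{n}\big)^n\Big)\;\le\;\tfrac13+x_i\;\le\;\tfrac12, \]
the middle step being Bernoulli's inequality $(1-\tfrac{x_i}{n})^n\ge 1-x_i$ (valid since $0\le x_i\le n$). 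For item 2, suppose $x_i\ge 2\log m$ for every $i$ and split $\E[p(X)]$ on the event that every block of $X$ contains a $1$: then every $\OR_n$ gate outputs $1$, so $\AND_m\circ\OR_n(X)=1$ and $p(X)\ge\tfrac23$, while otherwise $p(X)\ge 0$. Hence $q(x_1,\ldots,x_m)\ge\tfrac23\cdot\Pr[\text{every block contains a }1]$, and a union bound with $(1-\tfrac{x_i}{n})^n\le e^{-x_i}\le e^{-2\log m}=m^{-2}$ gives $\Pr[\text{some block is all-zero}]\le m\cdot m^{-2}=\tfrac1m\le\tfrac1{10}$ since $m\ge 10$, so $q(x_1,\ldots,x_m)\ge\tfrac23\cdot\tfrac9{10}=\tfrac35$.

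I do not expect a serious obstacle: the content is entirely in choosing the right symmetrization (erase-all-subscripts, which realizes the product-of-independent-Bernoullis average) and then plugging in the right tail estimates. The one place to be careful is the very first step --- checking that applying \Cref{lem:subscripts} block-by-block genuinely produces the independent-Bernoullis expectation above, which relies on the multilinearity reduction and on being able to carry extra variables through the symmetrization. The hypotheses $m\ge 10$ and $n>2\log m$ enter only lightly (the former to make the union bound beat $\tfrac1{10}$, the latter so that the region in item 2 is nonempty inside $[0,n]^m$), and the output constants $\tfrac12$ and $\tfrac35$ are exactly what the estimates deliver, so there is essentially no slack to exploit if one later wants to tune them.
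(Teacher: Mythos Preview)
Your proposal is correct and matches the paper's proof essentially step for step: multilinearize, apply the erase-all-subscripts symmetrization (\Cref{lem:subscripts}) blockwise to obtain $q$ as the product-Bernoulli average of $p$, then bound item~1 via the probability that a fixed block is all-zero and item~2 via a union bound over blocks together with $(1-x_i/n)^n\le e^{-x_i}$, yielding exactly the constants $\tfrac12$ and $\tfrac35$. The only cosmetic difference is that for item~1 you invoke Bernoulli's inequality $(1-x_i/n)^n\ge 1-x_i$ directly, whereas the paper instead bounds $(1-\tfrac{1}{6n})^n\ge\tfrac56$; both give the same final estimate.
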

\begin{proof}
Assume without loss of generality that $p$ is multilinear (because $x^2 = x$ over $\{0,1\}$), so that we can apply the erase-all-subscripts symmetrization (\Cref{lem:subscripts}) separately to the inputs of each $\OR$ gate. This erases all of the ``$j$'' subscripts, giving a polynomial $q(x_1,\ldots,x_m)$ with $\deg(q) \le \deg(p)$ such that:
$$q(x_1,\ldots,x_m) = \E_{(x_{i,1},\ldots,x_{i,n}) \sim B\left(n, \frac{x_i}{n}\right)}\left[ p(x_{1,1},\ldots,x_{m,n}) \right]$$
where $B(n, \rho)$ denotes the distribution over $\{0,1\}^n$ where each coordinate is selected from an independent Bernoulli distribution with probability $\rho$.

Suppose $x_i \in [2\log m, n]$ for all $i \in [m]$. Then we have that:
\begin{align}
\Pr_{(x_{i,1},\ldots,x_{i,n}) \sim B\left(n, \frac{x_i}{n}\right)} \left[ \AND_m \circ \OR_n (x_{1,1},\ldots,x_{m,n}) = 0\right]
&\le
\sum_{i=1}^m \Pr\left[(x_{i,1},\ldots,x_{i,n}) = 0^n\right] \label{eq:upper1}\\
&\le
m \cdot \left(1 - \frac{2 \log m}{n}\right)^n \label{eq:upper2}\\
&\le \frac{1}{m} \label{eq:upper3}
\end{align}
where \eqref{eq:upper1} follows from a union bound, \eqref{eq:upper2} follows by expanding each term, and \eqref{eq:upper3} follows from the exponential inequality.

On the other hand, suppose $x_i \in [0, n]$ for all $i \in [m]$, and that $x_{i^*} \in [0, \frac{1}{6}]$ for some $i^* \in [m]$. Then we have that:
\begin{align*}
\Pr_{(x_{i,1},\ldots,x_{i,n}) \sim B\left(n, \frac{x_i}{n}\right)} \left[ \AND_m \circ \OR_n (x_{1,1},\ldots,x_{m,n}) = 0\right]
&\ge
\Pr\left[(x_{i^*,1},\ldots,x_{i^*,n}) = 0^n\right]\\
&\ge
\left(1 - \frac{1}{6n}\right)^n\\
&\ge \frac{5}{6}
\end{align*}
by similar inequalities.

Since in general we can bound $q(x_1,\ldots,x_m)$ by
$$q(x_1,\ldots,x_m) \ge \frac{2}{3} \cdot \Pr_{(x_{i,1},\ldots,x_{i,n}) \sim B\left(n, \frac{x_i}{n}\right)}\left[ \AND_m \circ \OR_n (x_{1,1},\ldots,x_{m,n}) = 1\right]$$
and
$$q(x_1,\ldots,x_m) \le \frac{1}{3} \cdot \Pr\left[ \AND_m \circ \OR_n (x_{1,1},\ldots,x_{m,n}) = 0\right] + 1 \cdot \Pr\left[ \AND_m \circ \OR_n (x_{1,1},\ldots,x_{m,n}) = 1\right],$$
we can conclude that for all $(x_1,\ldots,x_m) \in [0, n]^m$:
\begin{enumerate}
\item $0 \le q(x_1,\ldots,x_m) \le \frac{1}{3}\cdot 1 + 1\cdot\frac{1}{6} = \frac{1}{2}$ when $x_i \le \frac{1}{6}$ for some $i$.
\item $\frac{3}{5} \le \frac{2}{3}(1 - \frac{1}{m}) \le q(x_1,\ldots,x_m) \le 1$ when $x_i \ge 2\log m$ for all $i$.\qedhere
\end{enumerate}
\end{proof}

Putting these two theorems together gives:

\begin{corollary}
\label{cor:main}
$\adeg(\AND_m \circ \OR_n) = \Omega\left(\frac{\sqrt{mn}}{\log m}\right)$.
\end{corollary}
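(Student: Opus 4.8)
The plan is to combine Theorem~\ref{thm:robust} and Theorem~\ref{thm:L} in the obvious way, with a small amount of case analysis to handle degenerate parameter ranges where the bound is trivial. First I would dispose of the case where $n$ is small: if $n \le 2\log m$, then $\sqrt{mn}/\log m = O(\sqrt{m})$, and since $\AND_m \circ \OR_n$ contains $\AND_m$ as a subfunction (fix one variable in each $\OR$ gate to $1$ except in a single block, or rather restrict each $\OR_n$ to a single input), we get $\adeg(\AND_m \circ \OR_n) \ge \adeg(\AND_m) = \Omega(\sqrt m)$ by Lemma~\ref{lem:adeg_or}, which suffices. Similarly if $m < 10$ the claimed bound is $\Omega(\sqrt n)$ up to constants, and restricting to a single $\OR_n$ gate gives $\adeg(\OR_n) = \Omega(\sqrt n)$. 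So we may assume $n > 2\log m$ and $m \ge 10$, exactly the hypotheses of Theorem~\ref{thm:robust}.

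Next I would take any polynomial $p$ of degree $d = \adeg(\AND_m \circ \OR_n)$ that $\frac13$-approximates $\AND_m \circ \OR_n$ in the sense of Theorem~\ref{thm:robust}, and apply that theorem to obtain a polynomial $q(x_1,\ldots,x_m)$ with $\deg(q) \le d$ satisfying: $0 \le q \le \frac12$ whenever some $x_i \le \frac16$, and $\frac35 \le q \le 1$ whenever all $x_i \ge 2\log m$, for inputs in $[0,n]^m$. Now I want to feed $q$ into Theorem~\ref{thm:L}. I would set the constants $\alpha = \frac12$, $\beta = \frac35$ (so $0 < \alpha < \beta < 1$ holds), and the thresholds $a = \frac16$, $b = 2\log m$. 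The remaining hypothesis of Theorem~\ref{thm:L} to check is $0 < a < b < n$ together with $\frac{b}{a} < \frac{n}{b}$, i.e. $b^2 < an$. With our values this reads $4\log^2 m < \frac{n}{6}$, i.e. $n > 24\log^2 m$; this is an extra constraint beyond $n > 2\log m$, so I would add one more case split: if $2\log m < n \le 24\log^2 m$, then $\sqrt{mn}/\log m = O(\sqrt m \cdot \log m / \log m) \cdot \sqrt{\log m}\cdot\ldots$ — more carefully, $\sqrt{mn}/\log m \le \sqrt{24 m \log^2 m}/\log m = O(\sqrt m)$, so again $\adeg(\AND_m\circ\OR_n) = \Omega(\sqrt m)$ from the $\AND_m$ subfunction suffices. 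Thus we may further assume $n > 24\log^2 m$, and then all hypotheses of Theorem~\ref{thm:L} are met.

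Applying Theorem~\ref{thm:L} to $q$ then yields
\[
d \ge \deg(q) = \Omega\!\left(\frac{n-b}{b-a}\sqrt{\frac{am}{n}}\right)
= \Omega\!\left(\frac{n - 2\log m}{2\log m - \frac16}\sqrt{\frac{m}{6n}}\right).
\]
Since $n > 24\log^2 m \ge 4\log m$ we have $n - 2\log m \ge n/2$, and $2\log m - \frac16 \le 2\log m$, so the prefactor $\frac{n-b}{b-a}$ is $\Omega\!\left(\frac{n}{\log m}\right)$. Multiplying by $\sqrt{m/(6n)} = \Theta(\sqrt{m/n})$ gives $\Omega\!\left(\frac{n}{\log m}\cdot\sqrt{\frac{m}{n}}\right) = \Omega\!\left(\frac{\sqrt{mn}}{\log m}\right)$, which is the desired conclusion.

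The main obstacle — really the only nonroutine point — is bookkeeping the parameter regimes: verifying the side condition $b^2 < an$ of Theorem~\ref{thm:L} forces $n = \Omega(\log^2 m)$, and one must separately observe that outside this regime (small $n$, or small $m$) the claimed $\frac{\sqrt{mn}}{\log m}$ bound degrades to $O(\sqrt m)$ or $O(\sqrt n)$ and is therefore implied by the trivial subfunction bounds $\adeg(\AND_m), \adeg(\OR_n) = \Omega(\sqrt{\cdot})$ from Lemma~\ref{lem:adeg_or}. Once the regimes are sorted, the rest is just chaining the two theorems and simplifying the asymptotic expression, as above.
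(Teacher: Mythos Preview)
Your proposal is correct and follows essentially the same approach as the paper's proof: handle the degenerate regimes $n \le 24\log^2 m$ and $m < 10$ via the trivial subfunction bounds from Lemma~\ref{lem:adeg_or}, and in the remaining regime chain Theorem~\ref{thm:robust} into Theorem~\ref{thm:L} with $a = \tfrac{1}{6}$, $b = 2\log m$, $\alpha = \tfrac{1}{2}$, $\beta = \tfrac{3}{5}$, then simplify the resulting expression. Your extra split of $n \le 2\log m$ versus $2\log m < n \le 24\log^2 m$ is harmless redundancy; the paper collapses these into the single case $n \le 24\log^2 m$.
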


\begin{proof}
If $n \le 24\log^2 m$, then the theorem holds trivially by the lower bound on $\adeg(\AND_m)$ (\Cref{lem:adeg_or}), since $\adeg(\AND_m \circ \OR_n) \ge \adeg(\AND_m) = \Omega(\sqrt{m})$. Likewise, if $m < 10$, then the theorem holds trivially by the lower bound on $\adeg(\OR_n)$. Otherwise, putting \Cref{thm:L} and \Cref{thm:robust} together gives $\adeg(\AND_m \circ \OR_n) = \Omega\left(\frac{n - 2\log m}{2 \log m - 1/6}\sqrt{\frac{m}{6n}}\right) = \Omega\left(\frac{\sqrt{mn}}{\log m}\right)$.
\end{proof}

\section{Discussion}
Though our lower bound on $\adeg(\AND_m \circ \OR_n)$ is not tight, it suggests a natural way to obtain a tighter lower bound via the same method: either tighten the ``robust symmetrization'' argument in \Cref{thm:robust} to eliminate the log factor (i.e.\ construct a polynomial of the same degree with $a = \Omega(1)$ and $b = O(1)$), or show that the lower bound in \Cref{thm:L} can be tightened by a $\log m$ factor. We conjecture that \Cref{thm:L} is tight, and that \Cref{thm:robust} can be improved.

We remark that nevertheless, it is an open problem to exhibit polynomials of minimal degree that satisfy the statement of \Cref{thm:L} with $a$ and $b$ constant, \textit{even in the case where $n=1$}! We highlight this below:

\begin{problem}\label{prob:robust}
What is the minimum degree of a polynomial $p(x_1,\ldots,x_m)$ with the property that for all $(x_1,\ldots,x_m) \in [0, 1]^m$:
\begin{enumerate}
\item If $x_i \le \frac{1}{3}$ for some $i$, then $0 \le p(x_1,\ldots,x_m) \le \frac{1}{3}$.
\item If $x_i \ge \frac{2}{3}$ for all $i$, then $\frac{2}{3} \le p(x_1,\ldots,x_m) \le 1$.
\end{enumerate}
In particular, do such polynomials exist of degree $O(\sqrt{m})$?
\end{problem}

While degree $\Omega(\sqrt{m})$ polynomials are clearly necessary (\Cref{lem:adeg_or}), the best upper bound we are aware of for \Cref{prob:robust} is $O\left(\sqrt{m}c^{\log^* m}\right)$ for some constant $c$, where $\log^*$ denotes the iterated logarithm. Such polynomials can be derived from a quantum algorithm for search on bounded-error inputs described in the introduction of \cite{hmd}, using the connection between quantum query algorithms and approximating polynomials \cite{bbcmd}.\footnote{Note that the main result of \cite{hmd} is of no use here, as it requires all inputs to be in $[0,\frac{1}{3}] \cup [\frac{2}{3}, 1]$, whereas case 1 of \Cref{prob:robust} allows inputs in $[\frac{1}{3}, \frac{2}{3}]$.}

On the other hand, we note that the conditions of \Cref{thm:L} are stronger than necessary. Upon closer observation, it suffices to have a polynomial that satisfies the following weaker conditions:
\begin{enumerate}
\item If $x_i \in [0, a] \cup [b, n]$ for all $i$ and $x_i \le a$ for some $i$, then $0 \le p(x_1,\ldots,x_m) \le \alpha$.
\item If $x_i = b$ for all $i$, then $\beta \le p(x_1,\ldots,x_m) \le 1$.
\end{enumerate}

And indeed, under these weaker conditions, the existence of such polynomials of degree $O(\sqrt{m})$ for $n = 1, a = \frac{1}{3}, b = \frac{2}{3}$ follows from the main result of H\o yer, Mosca, and de Wolf \cite{hmd}. Alternatively, the existence of such polynomials follows from Sherstov's result on making polynomials robust to noise \cite{sherstov4}.

Beyond the results that we (re-)proved, we wonder if the techniques introduced will find applications elsewhere. For example, we observed a connection between \Cref{lem:laurent} and the fundamental theorem of symmetric polynomials. Is the Laurent polynomial symmetrization a special case of a more general class of symmetrizations that involve analyzing symmetric polynomials in the basis of elementary symmetric polynomials? If so, do any of these other types of symmetrizations have applications in proving new lower bounds on approximate degree?

\subsection{Followup work}
After this work was published, we became aware of a recent work, due to Huang and Viola \cite{HV20}, that gives yet another lower bound proof for the approximate degree of the $\AND$-$\OR$ tree. The proof is based around the notion of \textit{$k$-wise indistinguishability}, which captures distributions over $n$-bit strings for which the marginal distributions on any $k$ bits are the same.

\section*{Acknowledgements}
This paper originated as a project in Scott Aaronson's Spring 2019 Quantum Complexity Theory course; I am grateful for his guidance. I thank Robin Kothari for bringing \Cref{prob:robust} and \cite{bbgk} to my attention. Thanks also to Justin Thaler for helpful discussions and feedback that were the inspiration for this paper.


\phantomsection\addcontentsline{toc}{section}{References}
\bibliographystyle{alphaurl}
\bibliography{AND-OR2}

\end{document}